\newtheorem{corol}{Corollary}
\newtheorem{theorem}{Theorem}
\newtheorem{defn}{Definition}
\newtheorem{rem}{Remark}
\newtheorem{Note}{Note}
\newcommand*{\QEDA}{\hfill\ensuremath{\blacksquare}}
\definecolor{myblue}{rgb}{.8, .8, 1}
\newlength{\dhatheight}
\newcommand{\doublehat}[1]{%
    \settoheight{\dhatheight}{\ensuremath{\widehat{#1}}}%
    \addtolength{\dhatheight}{-0.35ex}%
    \widehat{\vphantom{\rule{1pt}{\dhatheight}}%
    \smash{\widehat{#1}}}}
\def\tsc#1{\csdef{#1}{\textsc{\lowercase{#1}}\xspace}}
\begin{document}
\let\WriteBookmarks\relax
\def\floatpagepagefraction{1}
\def\textpagefraction{.001}
\shorttitle{Dark Fields do Exist in Weyl Geometry}
\shortauthors{}

\title [mode = title]{Dark Fields do Exist in Weyl Geometry}                      


\author[1]{Fereidoun Sabetghadam}[type=editor,
                        auid=,bioid=,
                        prefix=,
                        role=,
                        orcid=0000-0001-8412-8324]

\ead{fsabet@srbiau.ac.ir}











\begin{abstract}
A generalized Weyl integrable geometry (GWIG) is obtained from simultaneous affine transformations of the tangent and cotangent bundles of a (pseudo)-Riemannian manifold. In comparison with the classical Weyl integrable geometry (CWIG), there are two generalizations here: interactions with an arbitrary dark field, and, anisotropic dilation. It means that CWIG already has interactions with a {\it null} dark field. Some classical mathematics and physics problems may be addressed in GWIG. For example, by derivation of Maxwell's equations and its sub-sets, the conservation, hyperbolic, and elliptic equations on GWIG; we imposed interactions with arbitrary dark fields. Moreover, by using a notion analogous to Penrose conformal infinity, one can impose boundary conditions canonically on these equations. As a prime example, we did it for the elliptic equation, where we obtained a singularity-free potential theory. Then we used this potential theory in the construction of a non-singular model for a point charged particle. It solves the difficulty of infinite energy of the classical vacuum state. 
\end{abstract}



\begin{keywords}
Affine transformations \sep Weyl integrable geometry \sep Dark fields \sep Maxwell's equation \sep Non-singular potential theory \sep Finite-size charged particle \sep Classical vacuum energy 
\end{keywords}

\maketitle

\section{Introduction} \label{intro}
Given a (psudo)-Riemannian ${\mathsf d}$-manifold $(M,g)$, a Weyl integrable manifold is defined by $(M,[g],\lambda)$, where $[g]$ is the conformal class in which ${g}^\prime \sim {g}$ iff ${g}^\prime=e^{2\lambda(x)}{g}$ with smooth $\lambda:M\rightarrow {\Bbb R}$ where $x\in M$ \cite{Folland}. Our manifolds are finite-dimensional ${\mathsf d}<\infty$, and we study the classical fields. The quantum and relativistic fields may be considered as the extensions.

The idea of relating Weyl geometry with the dark fields has a long history \cite{Scholz2019}. Half a century ago, Dirac \cite{Dirac1973} in seeking gauge-invariant general relativity modified Weyl geometry. The new geometry interpreted later as the necessity of the presence of a dark matter \cite{Israelit1992}. In the next years, more or less similar ideas have appeared from time to time in different forms in the works of different researchers \cite{Cheng1988, Maeder2020}.  

Here we obtain a generalized Weyl integrable geometry (GWIG) by affine transformations of the pair ``{\it vector space-- dual space}'' at each point $x\in M$. An arbitrary pair ``(vector field, 1-form)'' appears in these transformations which can be interpreted as dark fields. A direct result of the approach is that the classical Weyl integrable geometry (CWIG) has already included interactions with a {\it null} dark field. The author first met a nave version of the transformations in imposition of the no-slip conditions on the Navier-Stokes equations \cite{Sabetghadam2015} (without aware of Weyl geometry). However, soon Weyl geometry presented itself. 

In the sequel, we first provide our suggested transformation in $\S$ \ref{Affin_trans}, which is used in obtaining GWIG and its properties in $\S$ \ref{Revised_IWG}. Our physical applications are provided in sections \ref{electromagnetism}--\ref{non-sing} and appendix \ref{append1}, where we obtained Maxwell, conservation, hyperbolic and elliptic equations on GWIG, and some theorems for their general solutions. Finally we provide a summary in $\S$ \ref{conclusion}.    
\section{A tensorial affine transformation} \label{Affin_trans}
We transform both tangent and cotangent bundles of a (pseudo)-Riemannian manifold simultaneously. The transformations are affine; and the affinity parameters are tensorial (not scalars), to admit anisotropy. 
\begin{defn}
Consider a (pseudo)-Riemannian $\mathsf d$-manifold $(M,g)$ and its tangent and cotangent fiber bundles $(TM \xrightarrow{\pi_v} M)$ and $(T^*M\xrightarrow{\pi_{\alpha}} M)$, containing the tangent vectors $v\in\Gamma(TM)$ and the differential 1-forms $\alpha\in\Omega^1(M)$.\\
We define the bundlemorphisms $\pmb{\mathfrak F}:=(\pmb{f}_{\pmb{\mathsf w}^v,v_d},\pmb{f}_{\pmb{\mathsf w}^\alpha,\alpha^d}):(\widehat{T}M,\widehat{T}^*M)\hookrightarrow(TM,T^*M)$ that $(v,\alpha)\mapsto(\widehat{v},\widehat{\alpha})$ as: 
\begin{numcases}{}
\widehat{v}:=\pmb{f}_{\pmb{\mathsf w}^v,v_d}(v)\,:=\big({\bf 1}-{\pmb{\mathcal K}^v}\big)(v)~\;+{\pmb{\mathcal K}^v}(v_d),
 \label{f} \\
\widehat{\alpha}:=\pmb{f}_{\pmb{\mathsf w}^\alpha,\alpha^d}(\alpha):=\big({\bf 1}-{\pmb{\mathcal K}^\alpha}\big)(\alpha)+{\pmb{\mathcal K}^\alpha}(\alpha^d),
\label{f*}
\end{numcases}
where $\pmb{\mathcal K}^v\in{\rm Aut}(TM)$ and $\pmb{\mathcal K}^\alpha\in{\rm Aut}(T^*M)$ are defined as:
\begin{numcases}{}
{\pmb{\mathcal K}^v}(\kappa):={\rm diag}\big({\mathcal K}^v_j\big):={\rm diag}\Big(1-\big(1-\kappa\big)^{{\mathsf w}^v_j}\Big),
 \label{K}\\
{\pmb{\mathcal K}^\alpha}(\kappa):={\rm diag}\big({\mathcal K}^\alpha_j\big):={\rm diag}\Big(1-\big(1-\kappa\big)^{{\mathsf w}^\alpha_j}\Big), \label{K*}
\end{numcases}
for the suitable choices of  $v_d\in \Gamma(TM)$, $\alpha^d\in\Omega^1(M)$, $\pmb{\mathsf w}^v:=\{{\mathsf w}^v_j\}_{j=0}^{\mathsf d-1}$ and $\pmb{\mathsf w}^\alpha:=\{{\mathsf w}^\alpha_j\}_{j=0}^{\mathsf d-1}$ where ${\mathsf w}^v_j,{\mathsf w}^\alpha_j\in {\Bbb R}$ are anisotropic Weyl weights; and $\kappa: M\to \ell$ is a free scalar affinity parameter, that is, $\ell:=[0,1)$.
\end{defn}
\noindent
The usefulness of transformations (\ref{f})--(\ref{K*}) is manifested via the following statement and note:
\begin{theorem}\label{the00}
For $\alpha=g(v)$, we have $\widehat{\alpha}=h(\widehat{v})$ where:
\begin{equation}\label{metric00}
h:=\left(\pmb{f}_{\pmb{\mathsf w}^\alpha,\alpha^d}\right)\circ g \circ \left(\pmb{f}_{\pmb{\mathsf w}^v,v_d}\right)^{-1}.
\end{equation}
\end{theorem}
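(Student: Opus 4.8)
The plan is to read (\ref{metric00}) as a change-of-variables identity: the dual pairing $\alpha = g(v)$ on the hatted bundles should transform, under the pair $\pmb{\mathfrak F}$, into $\widehat\alpha = h(\widehat v)$ on the unhatted bundles, with $h$ obtained by conjugating $g$ by the two affine maps. Since every map in play covers the identity on $M$, I would work fiberwise over a fixed $x\in M$, where $g_x$ is a linear isomorphism $T_xM\to T^*_xM$ and the two components of $\pmb{\mathfrak F}$ act as affine maps on the corresponding fibers (the fibers of $\widehat T M$ and $TM$, resp.\ $\widehat T^*M$ and $T^*M$, being regarded as the common underlying vector spaces on which $g_x$ is defined). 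The full bundle statement then follows by smoothness in $x$.

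The one genuinely necessary ingredient is the invertibility of $\pmb{f}_{\pmb{\mathsf w}^v,v_d}$, so that $(\pmb{f}_{\pmb{\mathsf w}^v,v_d})^{-1}$ in (\ref{metric00}) is meaningful. First I would isolate the linear part of this affine map, which from (\ref{f}) is ${\bf 1}-\pmb{\mathcal K}^v$; by (\ref{K}) this is the diagonal operator ${\rm diag}\big((1-\kappa)^{{\mathsf w}^v_j}\big)$. Because $\kappa$ takes values in $\ell=[0,1)$ we have $1-\kappa\in(0,1]$, strictly positive, so every diagonal entry $(1-\kappa)^{{\mathsf w}^v_j}$ is positive regardless of the real weight ${\mathsf w}^v_j$; hence ${\bf 1}-\pmb{\mathcal K}^v$ is a fiberwise automorphism with inverse ${\rm diag}\big((1-\kappa)^{-{\mathsf w}^v_j}\big)$, and inverting the affine map yields $v = ({\bf 1}-\pmb{\mathcal K}^v)^{-1}\big(\widehat v - \pmb{\mathcal K}^v(v_d)\big)$. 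This is the step I expect to be the only obstacle, and it is a mild one: it is precisely the reason the admissible range of the affinity parameter is $[0,1)$ rather than $[0,1]$, since $\kappa=1$ would collapse the linear part to zero and destroy invertibility.

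With invertibility secured, the remainder is a direct substitution. By the definition (\ref{f*}) of the hatted 1-form, $\widehat\alpha=\pmb{f}_{\pmb{\mathsf w}^\alpha,\alpha^d}(\alpha)$; inserting the hypothesis $\alpha=g(v)$ and then $v=(\pmb{f}_{\pmb{\mathsf w}^v,v_d})^{-1}(\widehat v)$ gives
\begin{equation}
\widehat\alpha = \pmb{f}_{\pmb{\mathsf w}^\alpha,\alpha^d}\Big(g\big((\pmb{f}_{\pmb{\mathsf w}^v,v_d})^{-1}(\widehat v)\big)\Big) = \Big(\pmb{f}_{\pmb{\mathsf w}^\alpha,\alpha^d}\circ g \circ (\pmb{f}_{\pmb{\mathsf w}^v,v_d})^{-1}\Big)(\widehat v) = h(\widehat v),
\end{equation}
which is exactly (\ref{metric00}). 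Finally I would record that, because $\pmb{f}_{\pmb{\mathsf w}^\alpha,\alpha^d}$ is affine, $g$ is linear, and $(\pmb{f}_{\pmb{\mathsf w}^v,v_d})^{-1}$ is affine, the composite $h$ is itself an affine bundle morphism rather than a linear one. This affine character, namely the translation term built from $v_d$ and $\alpha^d$, is what encodes the interaction with the dark field and distinguishes $h$ from an ordinary conformally rescaled metric.
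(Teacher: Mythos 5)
Your proof is correct and follows essentially the same route as the paper, whose entire proof is the assertion that the square $v\mapsto\widehat v$, $v\mapsto\alpha$, $\alpha\mapsto\widehat\alpha$, $\widehat v\mapsto\widehat\alpha$ commutes --- i.e.\ exactly your substitution $\widehat\alpha=\pmb{f}_{\pmb{\mathsf w}^\alpha,\alpha^d}\circ g\circ(\pmb{f}_{\pmb{\mathsf w}^v,v_d})^{-1}(\widehat v)$. Your explicit verification that the linear part ${\bf 1}-\pmb{\mathcal K}^v={\rm diag}\big((1-\kappa)^{{\mathsf w}^v_j}\big)$ is invertible because $\kappa\in[0,1)$ is a detail the paper leaves implicit, and it is a worthwhile addition rather than a deviation.
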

\begin{proof}
The above diagram should commute.
    \begin{tikzpicture}[node distance = 1.5cm, thick, black]%
        \node (1) {$v$};
        \node (2) [right=of 1] {$\widehat{v}$};
        \node (3) [below=of 2] {$\widehat{\alpha}$};
        \node (4) [left=of 3]  {$\alpha$};
        \draw[->] (1) -- node [midway,above] {${f}_{\pmb{\mathsf w}^v,v_d}$} (2);
        \draw[->] (1) -- node [midway,left]{$g$} (4);
        \draw[->] (2) -- node [midway,right, red] {$h=({f}_{\pmb{\mathsf w}^{\alpha},\alpha^d})\circ g \circ ({f}_{\pmb{\mathsf w}^v,v_d})^{-1}$} (3);
        \draw[->] (4) -- node [midway,below] {${f}_{\pmb{\mathsf w}^\alpha,\alpha^d}$} (3);
    \end{tikzpicture}%
\end{proof}
\begin{Note}\label{note2_n}
We emphasize that: $\left(\pmb{\mathcal K}^v(0),\pmb{\mathcal K}^\alpha(0)\right)=({\bf 0},{\bf 0})$ and $\left(\pmb{\mathcal K}^v(1),\pmb{\mathcal K}^\alpha(1)\right)=({\bf 1},{\bf 1})$. It results in the following interesting properties for the bundlemorphisms $\pmb{\mathfrak{F}}$:
\begin{enumerate}
\item Everywhere that $\kappa(x)=0$, we have $\pmb{\mathfrak{F}}=\left({{\rm id}_{{\rm Aut}(TM)}},{\rm id}_{{\rm Aut}(T^*M)}\right)$, that is, $(\widehat{v},\widehat{\alpha})=(v,\alpha)$, which means the genuine Riemannian geometry. 
\item Everywhere that $\kappa(x)\to 1$, we have $\lim_{\kappa\to 1}\left(\pi_v^{-1}(x),\pi_\alpha^{-1}(x)\right)=(v_d,\alpha^d)_x$, that is, the vector spaces $(\widehat{T}_xM,\widehat{T}^*_xM)$ shrink to the (vector, covector) $\left((v_d)_x,(\alpha^d)_x\right)$. \label{item2}
\item Meanwhile, we have $\widehat{\alpha}=h(\widehat{v})$ for any $0\leq\kappa<1$.
\end{enumerate}
\end{Note}
\begin{rem}
 On the contrary to item \ref{item2} above, one can suppose that some primary vector filed and 1-form $(v_d,\alpha^d)$ have been  expanded to the bundles $(\widehat{T}M,\widehat{T}^*M)$. Based on this interpretation, we suggest the name ``primary fields'' instead of the ``dark fields'' for $(v_d,\alpha^d)$. 
\end{rem}
On one spacetime manifold, different physical quantities may have different $(\pmb{\mathsf w}^v,v_d)$ and $(\pmb{\mathsf w}^\alpha,\alpha^d)$, and therefore, different $\pmb{\mathfrak{F}}$. Here we distinguish the displacements on the spacetime manifold from the other tangent vector fields. 
\section{A generalized Weyl integrable geometry (GWIG)} \label{Revised_IWG}
We distinguish the displacements on the spacetime manifold from the other tangent vector fields; and for each one we define a sub-set of transformations (\ref{f})-(\ref{K*}). Then, together, they form the generalized Weyl integrable geometry that includes interaction with the dark (primary) fields. These are provided in the following two sub-sections.
\subsection{Displacement on the spacetime manifold}\label{spacetimea-man}
For the displacement vectors (on the spacetime manifold), we suggest $\alpha^d=v_d=0$, which is consistent with the classical picture of the space-time manifold. Moreover, we define ${\mathsf w}^\alpha_j=:{\mathsf z}_j$, and we assume ${\mathsf w}^v_j=0$.  Therefore:
\begin{equation}\label{eq15}
\pmb{\mathfrak{F}}_X:=\left(\pmb{f}_{\pmb{0},0},\pmb{f}_{\pmb{\mathsf z},0}\right) =(\pmb{1},\pmb{\chi}),
\end{equation}
where $\pmb{\mathsf z}:=\{{\mathsf z}_j\}_{j=0}^{\mathsf d-1}$, $\mathsf{z}_j\in {\Bbb R}$; and  
\begin{equation}
\pmb{\chi}:={\rm diag}(\chi_j):={\rm diag}\Big((1-\kappa)^{{\mathsf z}_j}\Big). \label{chi}
\end{equation}
We emphasize that $\pmb{\chi}(0)={\bf 1}$ and $\pmb{\chi}(1)={\bf 0}$.
\begin{rem}\label{rem_2}
By the term `dilation', one usually means dilation in spacetime. Therefore, we consider $\pmb{\mathsf z}$ as the set of parameters that defines the dilations, and we call  $\pmb{\chi}$ as the dilation tensor. Then, other functions on $M$ may have their own $\left(v_d,\pmb{\mathsf w}^v(\pmb{\mathsf z})\right)$ and $\left(\alpha^d,\pmb{\mathsf w}^\alpha(\pmb{\mathsf z})\right)$. Note that anisotropic dilations are permitted because ${\mathsf z}_j$ may be different. Moreover, since $\pmb{\chi}$ is diagonal, we will use $\chi_j=\chi^j$ wherever it is needed in Einstein summation convention.
\end{rem}
\noindent
The relation between the displacement vector fields and their associated differential 1-forms is obtainable directly from theorem \ref{the00}:
\begin{corol}\label{coroll_1}
Let $\xi\in TM$ be a displacement vector field on the manifold $(M,g)$, and $(\xi)^\flat=:\eta\in T^*M$. Then, under the transformation $\pmb{\mathfrak{F}}_X$, we have $\widehat{\eta}=\widehat{g}(\xi)$, where $\widehat{g}=\pmb{\chi}\circ g$. 
\end{corol}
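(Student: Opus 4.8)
The plan is to recognize that Corollary \ref{coroll_1} is nothing more than the specialization of Theorem \ref{the00} to the displacement data $v_d=\alpha^d=0$, $\pmb{\mathsf w}^v=\pmb 0$, $\pmb{\mathsf w}^\alpha=\pmb{\mathsf z}$ introduced in (\ref{eq15})--(\ref{chi}). First I would match the hypotheses: set $v=\xi$ and $\alpha=\eta$. Since $\eta:=(\xi)^\flat$ is by definition the metric lowering of $\xi$, i.e. $\eta=g(\xi)$ under the musical identification of $g$ with the bundle map $TM\to T^*M$, the assumption $\alpha=g(v)$ of Theorem \ref{the00} is satisfied verbatim. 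The theorem then delivers $\widehat{\eta}=h(\widehat{\xi})$ with $h=(\pmb{f}_{\pmb{\mathsf w}^\alpha,\alpha^d})\circ g\circ(\pmb{f}_{\pmb{\mathsf w}^v,v_d})^{-1}$, so the whole task reduces to evaluating this composite for the displacement data.

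Next I would compute the two factor maps. For the tangent factor, (\ref{K}) with ${\mathsf w}^v_j=0$ gives $\pmb{\mathcal K}^v={\rm diag}\big(1-(1-\kappa)^{0}\big)=\pmb 0$, so by (\ref{f}) with $v_d=0$ one has $\pmb{f}_{\pmb 0,0}(\xi)=({\bf 1}-\pmb 0)(\xi)=\xi$; that is, $\pmb{f}_{\pmb 0,0}=\pmb 1$ and in particular $\widehat{\xi}=\xi$. For the cotangent factor, (\ref{K*}) with ${\mathsf w}^\alpha_j={\mathsf z}_j$ gives $\pmb{\mathcal K}^\alpha={\rm diag}\big(1-(1-\kappa)^{{\mathsf z}_j}\big)$, whence the key algebraic identity $({\bf 1}-\pmb{\mathcal K}^\alpha)={\rm diag}\big((1-\kappa)^{{\mathsf z}_j}\big)=\pmb\chi$ by (\ref{chi}); since $\alpha^d=0$, (\ref{f*}) then collapses to $\pmb{f}_{\pmb{\mathsf z},0}=\pmb\chi$. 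These are precisely the identifications already recorded as $\pmb{\mathfrak F}_X=(\pmb 1,\pmb\chi)$ in (\ref{eq15}).

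Finally I would substitute into $h$. Because $\pmb{f}_{\pmb{\mathsf w}^v,v_d}=\pmb 1$ is invertible with $(\pmb 1)^{-1}=\pmb 1$, the composite collapses to $h=\pmb\chi\circ g\circ\pmb 1=\pmb\chi\circ g$, which is exactly the claimed $\widehat g$. Combining this with $\widehat{\xi}=\xi$ from the tangent factor, the conclusion of Theorem \ref{the00} reads $\widehat{\eta}=h(\widehat{\xi})=(\pmb\chi\circ g)(\xi)=\widehat g(\xi)$, as asserted. I do not expect a genuine obstacle here, as the statement is a direct corollary; the only point demanding care is the bookkeeping of the two diagonal operators, specifically verifying $({\bf 1}-\pmb{\mathcal K}^\alpha)=\pmb\chi$, which is what converts the cotangent component of $\pmb{\mathfrak F}_X$ into the dilation tensor and thereby produces the conformal-type factor $\widehat g=\pmb\chi\circ g$.
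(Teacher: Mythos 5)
Your proposal is correct and follows essentially the same route as the paper: the paper's proof is precisely the commuting diagram $\xi\xrightarrow{g}\eta\xrightarrow{\pmb{\chi}}\widehat{\eta}$ obtained by specializing Theorem \ref{the00} to the displacement data $\pmb{\mathfrak F}_X=(\pmb 1,\pmb\chi)$, where the tangent factor is the identity (hence no separate $\widehat{\xi}$ node) and the cotangent factor is $\pmb\chi$. You simply make explicit the bookkeeping the paper leaves to Eqs. (\ref{eq15})--(\ref{chi}), namely $\pmb{\mathcal K}^v=\pmb 0$ and $({\bf 1}-\pmb{\mathcal K}^\alpha)=\pmb\chi$, so that $h=\pmb\chi\circ g\circ\pmb 1^{-1}=\widehat g$.
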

\begin{proof}
The above diagram should commute.
    \begin{tikzpicture}[node distance = 1.5cm, thick, black]%
        \node (1) {$\xi$};
        \node (2) [below=of 1] {$\eta$};
        \node (3) [right=of 2]  {$\widehat{\eta}$};
        \draw[->] (1) -- node [midway,left] {$g$} (2);
        \draw[->] (2) -- node [midway,below]{$\pmb{\chi}$} (3);
        \draw[->] (1) -- node [midway,right, red] {$~~\widehat{g}=\pmb{\chi}\circ g$} (3);
    \end{tikzpicture}%
\end{proof}
\begin{Note}
Unless for the isotropic dilation, $\widehat{g}$ is not symmetric. Moreover, for $\mathsf{z}_j > 0 $ by $\kappa=0$, we have $\widehat{g}=g$, and $\lim_{\kappa \to 1}\widehat{g}=0$.
\end{Note}
Now, let $\{\partial_j,dx^j\}_{j=0}^{\mathsf d-1}$ be the {\it standard} local frames on $(TM,T^*M)$, where $dx^i(\partial_j)=\delta^i_{\;j}$. Then, under $\pmb{\mathfrak{F}}_X$ they transform as:
\begin{numcases}{}
\widehat{\partial}_j=\partial_j, \label{partial_1}\\
\widehat{dx}^j={\pmb{\chi}}(dx^j)=\chi^j\;dx^j=(1-\kappa)^{\mathsf z_j}\;dx^j. \label{dual_f_1}
\end{numcases}
That is, merely the dual frame is affected while the coordinate vectors are remained unchanged. The following definition and theorem relate this geometry with the classical Weyl integrable geometry:
\begin{defn}\label{Def3}
We distinguish a Riemannian observer $\mathscr{R}:=\{\partial_j,\widehat{dx}^j\}_{j=0}^{\mathsf{d}-1}$ from a Weylian observer $\mathscr{W}:=\{\partial_j,dx^j\}_{j=0}^{\mathsf{d}-1}$.
\end{defn}
\begin{Note}
Choosing $\{\widehat{dx}^j\}_{j=0}^{\mathsf{d}-1}$ as the dual basis is not a part of transformation $\pmb{\mathfrak{F}}_X$. It is just choosing a new basis on a transformed vector space, equivalent to choosing $\widehat{dx}^i(\partial_{j})={\rm diag}(\chi^i)$, instead of the conventional choice $dx^i(\partial_{j})=\delta^i_{\;j}$.
\end{Note}
\begin{defn}\label{Def4}
The representations of $\widehat{g}$ on $\mathscr R$ and $\mathscr W$ are denoted by $(\widehat{g}_{\mathscr R})_{ij}$ and $(\widehat{g}_{\mathscr W})_{ij}$, that is: 
\begin{equation}
\widehat{g}=:(\widehat{g}_{\mathscr R})_{ij}\; \widehat{dx}^i\otimes \widehat{dx}^j=:(\widehat{g}_{\mathscr W})_{ij}\;dx^i\otimes dx^j.
\end{equation}
\end{defn}
\begin{corol}
One can write: 
\begin{equation}
(\widehat{g}_{\mathscr W})_{ij}=\chi_j\; g_{ij},
\end{equation}
and: 
\begin{equation}
(\widehat{g}_{\mathscr R})_{ij}=e^{{\Lambda}_i}(\widehat{g}_{\mathscr W})_{ij}e^{{\Lambda}_j},
\end{equation}
in which we defined ${\Lambda}:={\rm diag}({\mathsf z}_j\lambda)$, such that $e^{-{\Lambda}}:=\chi$, where $\chi$ is the matrix representation of $\pmb \chi$.
\end{corol}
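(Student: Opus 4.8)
The plan is to derive both identities by pure bookkeeping with the diagonal tensor $\pmb{\chi}$, leaning on Corollary~\ref{coroll_1} together with the frame transformation laws (\ref{partial_1})--(\ref{dual_f_1}) and the two observers of Definition~\ref{Def3} with their component arrays fixed in Definition~\ref{Def4}. Neither identity carries real geometric content beyond Corollary~\ref{coroll_1}; they are two different coordinate readings of the same object $\widehat{g}$.

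For the first identity I would start from Corollary~\ref{coroll_1}, i.e. $\widehat{g}=\pmb{\chi}\circ g$ viewed as a flat map $TM\to T^{*}M$, and evaluate it on an arbitrary displacement $\xi=\xi^{i}\partial_{i}$. Lowering with $g$ gives $\eta=g(\xi)=g_{ij}\xi^{i}\,dx^{j}$, and then acting with $\pmb{\chi}$, which by (\ref{dual_f_1}) sends $dx^{j}\mapsto\chi^{j}\,dx^{j}$ with no summation on $j$, produces $\widehat{\eta}=\chi^{j}g_{ij}\xi^{i}\,dx^{j}$. Matching the coefficient of $dx^{j}$ against the flat form of the expansion $\widehat{g}=(\widehat{g}_{\mathscr W})_{ij}\,dx^{i}\otimes dx^{j}$ from Definition~\ref{Def4}, and identifying $\chi_{j}=\chi^{j}$ as permitted by the diagonality noted in Remark~\ref{rem_2}, yields $(\widehat{g}_{\mathscr W})_{ij}=\chi_{j}\,g_{ij}$.

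For the second identity I would regard it as a change of dual basis between the Weylian frame $\{dx^{i}\}$ and the Riemannian frame $\{\widehat{dx}^{i}\}$ of Definition~\ref{Def3}. Inverting (\ref{dual_f_1}) gives $dx^{i}=(\chi^{i})^{-1}\widehat{dx}^{i}$; and since $e^{-\Lambda}:=\chi$, so that $e^{-\Lambda_{i}}=\chi_{i}=\chi^{i}$, this reads $dx^{i}=e^{\Lambda_{i}}\widehat{dx}^{i}$. Substituting into the Weylian expansion $\widehat{g}=(\widehat{g}_{\mathscr W})_{ij}\,dx^{i}\otimes dx^{j}$ and collecting the diagonal scalar factors gives $\widehat{g}=e^{\Lambda_{i}}(\widehat{g}_{\mathscr W})_{ij}e^{\Lambda_{j}}\,\widehat{dx}^{i}\otimes\widehat{dx}^{j}$, and comparison with the Riemannian expansion in Definition~\ref{Def4} delivers $(\widehat{g}_{\mathscr R})_{ij}=e^{\Lambda_{i}}(\widehat{g}_{\mathscr W})_{ij}e^{\Lambda_{j}}$.

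Both computations are routine, so I do not expect a genuine obstacle; the only points requiring care are conventional. First, the factors $\chi^{j}$ and $e^{\Lambda_{j}}$ are diagonal entries and therefore attach to a single index value rather than being summed, which must be respected when reading off coefficients. Second, because $\widehat{g}$ need not be symmetric (as noted just after Corollary~\ref{coroll_1}), the slot into which the input vector is contracted through the flat map $g$ genuinely matters, and one must fix the musical convention $\widehat{g}(\xi)=\widehat{g}(\xi,\cdot)$ in order to land the factor $\chi_{j}$ on the second index exactly as stated.
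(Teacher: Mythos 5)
Your proof is correct and is exactly the computation the paper has in mind when it declares ``the proofs are immediate'': the first identity follows from $\widehat{g}=\pmb{\chi}\circ g$ together with the action $dx^{j}\mapsto\chi^{j}dx^{j}$, and the second is the change of dual basis $dx^{i}=e^{\Lambda_{i}}\widehat{dx}^{i}$ applied to Definition~\ref{Def4}. Your result is also consistent with Remark~\ref{single_metric}, since $e^{\Lambda_{i}}\chi_{j}g_{ij}e^{\Lambda_{j}}=\chi_{i}^{-1}g_{ij}$, which confirms the bookkeeping.
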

\begin{proof}
The proofs are immediate.
\end{proof}
\noindent
Note that by the above definitions $\kappa(x)$ is related to Weyl dilation function $\lambda(x)$ by:
\begin{equation}\label{kap_lamb_rel}
e^{-\lambda(x)}:=1-\kappa(x),
\end{equation}
which means for $\kappa\in [0,1)$, we have $\lambda\in [0,+\infty)$. Manifestly, in $\widehat{g}=\pmb{\chi}\circ g$, by ${\mathsf z}_j=2$ the classical Weyl metric ${g}_{\mathscr W}=e^{-2\lambda(x)}g_{\mathscr R}$ is retrieved. 
\begin{rem}\label{single_metric}
As a summary, $\pmb{\mathfrak{F}}_X$ associates $\widehat{g}=\pmb{\chi}(g)$ to $M$. Then, $\widehat{g}$ has two representations on ${\mathscr R}$ and ${\mathscr W}$ as $(\widehat{g}_{\mathscr W})_{ij}=\chi_j\;g_{ij}$, and $(\widehat{g}_{\mathscr R})_{ij}=e^{{\Lambda}_i}(\widehat{g}_{\mathscr W})_{ij}e^{{\Lambda}_j}=\chi_i^{-1}\;g_{ij}$. 
\end{rem} 
\subsection{Other tangent vector fields}
For the tangent vector fields, other than displacements on the spacetime, we choose another su-bset of (\ref{f})-(\ref{K*}) as follows. Manifestly, $(v_d,\alpha^d)$ are the fixed points of $(\pmb{f}_{\pmb{\mathsf w}^v,v_d},\pmb{f}_{\pmb{\mathsf w}^\alpha,\alpha^d})$. In the most general form, they are related to each other by $\alpha^d=\pmb{T}(g(v_d))$, where $\pmb{T}\in{\rm Aut}(T^*M)$. In the present work we merely study the classical fields. Therefore, we  restrict ourselves to the case that $\pmb{T}={\rm id}_{T^*M}$. Moreover, we need only one set of Weyl weights $\pmb{\mathsf w}$. Summarily:
\begin{numcases}{}
\alpha^d=g(v_d)=(v_d)^\flat, \label{alph_v} \\
\pmb{\mathsf w}^v=\pmb{\mathsf w}^\alpha=\pmb{\mathsf w}(\pmb{\mathsf z}).  \label{wj}
\end{numcases}
Therefore, the transformation for a general tangent vector field (and its dual) reads:
\begin{equation}
\pmb{\mathfrak{F}}_V:=\left(\pmb{f}_{\pmb{\mathsf w},v_d},\pmb{f}_{\pmb{\mathsf w},(v_d)^\flat}\right).
\end{equation}
Then, by writing the relation between a transformed vector $\widehat{v}$ and its associated 1-form $\widehat{\alpha}$ as $\widehat{\alpha}=h(\widehat{v})$, theorem \ref{the00} implies:
\begin{equation}\label{g-g}
h=\left(\pmb{f}_{\pmb{\mathsf w},(v_d)^\flat}\right)\circ g \circ \left(\pmb{f}_{\pmb{\mathsf w},v_d}\right)^{-1}.
\end{equation}

\noindent 
There are two important cases that $h=g$:
\begin{corol}\label{corol1}
In the following two cases, for $v=(\alpha)^\sharp$, we have $\widehat{v}=(\widehat{\alpha})^\sharp$, that is, $h=g$:
\begin{enumerate}
\item ${\mathsf w}_j$ s are the same, say ${\mathsf w}_j=\mathsf{w}$;
\item $g$ is diagonal.
\end{enumerate}
\end{corol}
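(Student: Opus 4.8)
The plan is to reduce the claimed identity $h=g$ to a single commutation relation between the diagonal operator $\pmb{\mathcal K}$ and the metric $g$, and then to verify that relation separately in the two stated cases. Starting from the expression \eqref{g-g} for $h$ and composing on the right by $\pmb{f}_{\pmb{\mathsf w},v_d}$, which is an automorphism so that the step is an equivalence, the assertion $h=g$ becomes the inverse-free operator identity
\[
\pmb{f}_{\pmb{\mathsf w},(v_d)^\flat}\circ g \;=\; g\circ\pmb{f}_{\pmb{\mathsf w},v_d}
\]
as maps $TM\to T^*M$. This is the form I would work with.

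Next I would expand both sides using \eqref{f}--\eqref{f*}, recalling that here $\pmb{\mathsf w}^v=\pmb{\mathsf w}^\alpha=\pmb{\mathsf w}$ by \eqref{wj}, so that $\pmb{\mathcal K}^v$ and $\pmb{\mathcal K}^\alpha$ share the same diagonal matrix $\pmb{\mathcal K}=\mathrm{diag}(\mathcal K_j)$ with $\mathcal K_j=1-(1-\kappa)^{\mathsf w_j}$. Writing $\pmb{f}_{\pmb{\mathsf w},(v_d)^\flat}(g(v))=(\mathbf{1}-\pmb{\mathcal K})(g(v))+\pmb{\mathcal K}((v_d)^\flat)$ and $g(\pmb{f}_{\pmb{\mathsf w},v_d}(v))=g((\mathbf{1}-\pmb{\mathcal K})(v))+g(\pmb{\mathcal K}(v_d))$, and then matching the part linear in $v$ against the part independent of $v$, I obtain two requirements. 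Here the relation $\alpha^d=(v_d)^\flat=g(v_d)$ from \eqref{alph_v} is doing the essential work: it converts the inhomogeneous (affine) term $\pmb{\mathcal K}((v_d)^\flat)$ into $\pmb{\mathcal K}(g(v_d))$, so the constant pieces on the two sides agree precisely whenever the linear pieces do. Both requirements therefore collapse to the single condition
\[
\pmb{\mathcal K}\circ g \;=\; g\circ \pmb{\mathcal K},
\]
equivalently, in the standard frame, $\mathcal K_i\,g_{ij}=g_{ij}\,\mathcal K_j$, i.e. $(\mathcal K_i-\mathcal K_j)\,g_{ij}=0$ for all $i,j$.

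Finally I would dispatch the two cases against this condition. In case (1), $\mathsf w_j=\mathsf w$ forces every $\mathcal K_j=1-(1-\kappa)^{\mathsf w}$ to coincide, so $\pmb{\mathcal K}$ is a scalar multiple of $\mathbf{1}$ and commutes with every operator, in particular with $g$. In case (2), $g$ diagonal means $g_{ij}=0$ whenever $i\neq j$, so $(\mathcal K_i-\mathcal K_j)\,g_{ij}$ vanishes term by term (two diagonal matrices always commute). In either case the commutation holds, hence $h=g$; and for $v=\alpha^\sharp$ this yields $\widehat{v}=\widehat{\alpha}^\sharp$, as claimed.

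I do not expect a genuine obstacle: the one point requiring care is the bookkeeping of the affine (non-linear) part of $\pmb{f}$, and the observation that the choice $\alpha^d=g(v_d)$ is exactly what renders the inhomogeneous terms compatible. Once that is noticed, the statement is a one-line commutation check, and the two hypotheses are simply the two elementary ways of guaranteeing that a diagonal matrix commutes with $g$.
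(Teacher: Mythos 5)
Your proof is correct and is essentially the paper's own argument: the paper's proof consists solely of the remark that both cases ``can be checked directly from Eq.~(\ref{g-g})'', and your reduction of $h=g$ to the commutation condition $\pmb{\mathcal K}\circ g=g\circ\pmb{\mathcal K}$ --- with the relation $\alpha^d=(v_d)^\flat=g(v_d)$ from Eq.~(\ref{alph_v}) absorbing the affine (inhomogeneous) terms --- is exactly that direct check, carried out in full. Your component form $(\mathcal K_i-\mathcal K_j)\,g_{ij}=0$, verified trivially when all $\mathsf w_j$ coincide or when $g$ is diagonal, supplies the detail the paper leaves to the reader.
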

\begin{proof}
They can be checked directly from Eq. (\ref{g-g}). 
\end{proof}
\subsection{A summary}
The bundle morphism $\pmb{\mathfrak F}_X$ acts on the tangent and cotangent bundles of a spacetime (pseudo)-Riemannian ${\mathsf d}$-manifold $(M,g)$. It affects the dual frame bundle, and leaves the frame bundle unchanged. Accordingly, two observers $\mathscr R$ and $\mathscr W$ may be defined that differ in their dual frames. Moreover, $\pmb{\mathfrak F}_X$ affects the other fields on $M$; the affections which is modeled by another bundlemorphism $\pmb{\mathfrak F}_V$. Therefore, the pair $(\pmb{\mathfrak F}_X,\pmb{\mathfrak F}_V)$ defines GWIG. It may be observed by $\mathscr R$ or $\mathscr W$. 
\subsection{$\mathscr R$ and $\mathscr W$ observations}\label{R-W}
$\mathscr R$ and $\mathscr W$ have different dual frames. Therefore, they observe functions differently. Here we summarize the differences, that are especially evident in the extreme situations $\kappa=0$ and $\kappa\to 1$.  
\begin{enumerate}
\item For $\kappa=0$, we have $(\widehat{g}_{\mathscr R})_{ij}=(\widehat{g}_{\mathscr W})_{ij}=g_{ij}$; and $\lim_{\kappa\to 1}(\widehat{g}_{\mathscr R})_{ij}=\infty$ while $\lim_{\kappa\to 1}(\widehat{g}_{\mathscr W})_{ij}=0$. 
\item An arbitrary transformed 1-form $\widehat{\alpha}$ has the representations $\widehat{\alpha}_j$ and $\doublehat{\alpha}_j$ on $\mathscr R$ and $\mathscr W$ respectively, that is: 
\begin{equation}
\widehat{\alpha}=\widehat{\alpha}_j\;\widehat{dx}^j=\doublehat{\alpha}_j\;{dx}^j.
\end{equation}
It means $\doublehat{\alpha}_j=\chi_j\;\widehat{\alpha}_j$. On the other hand, $\pmb{f}_{\pmb{\mathsf w},\alpha^d}$ yields $\widehat{\alpha}_j=(1-\mathcal{K}_j)\alpha_j+\mathcal{K}_j\alpha_j^d$.\\ 
Now, an interesting property is that while $\lim_{\kappa\to 1}\widehat{\alpha}_j=\alpha_j^d$, we have $\lim_{\kappa\to 1}\doublehat{\alpha}_j=0$, that is:
\begin{quote}
``Everywhere that $\kappa\to 1$, what that measuers as $\alpha^d$ by $\mathscr R$, it measuers as zero by $\mathscr W$.''
\end{quote}
Based on this property, we will suggest a notion, analogous to Penrose conformal infinity, which enables us to impose canonically the  boundary conditions on the solutions of Maxwell's equations.
\item The components of the vector fields $\widehat{v}_j$are the same for both $\mathscr R$ and $\mathscr W$, because their frame bundles are the same. Consequently, the scalar products have different results on $\mathscr R$ and $\mathscr W$, that is, $\mathfrak{r}_{\mathscr R}:=\widehat{\alpha}(\widehat{v})=\widehat{\alpha}_j\widehat{v}^j$, while $\mathfrak{r}_{\mathscr W}:=\widehat{\alpha}(\widehat{v})=\doublehat{\alpha}_j\widehat{v}^j=\chi_j\widehat{\alpha}_j\widehat{v}^j$. Now, as one can see: $\lim_{\kappa\to 1}\mathfrak{r}_{\mathscr R}= \alpha^d_iv_d^i$, while $\lim_{\kappa\to 1}\mathfrak{r}_{\mathscr W}=0$.
\item \label{item4-3} There is an arbitrariness in definition of the volume forms on a non-Riemannian manifold \cite{Eisenhart1926}. The genuine Riemannian volume form $\mathcal{V}=\sqrt{|{\rm det}(g)|} d{\mathscr{V}}$ is an invariant of $\pmb{\mathfrak F}_X$. On the other hand, by definition of the volume form $\widehat{\mathcal{V}}=:\widehat{\mathcal{V}}_{\mathscr R}\;\widehat{d\mathscr{V}}$, where $\widehat{\mathcal{V}}_{\mathscr R}=\sigma \sqrt{|\rm{det}(g)|}$ and $\sigma:=\sqrt{|{\rm det}(\chi^{-1})|}=(1-\kappa)^{-\frac{1}{2}\Sigma{\mathsf z}_j}$ is the dilation density; we have $\lim_{\kappa\to 1}\widehat{\mathcal{V}}=0$, while $\lim_{\kappa\to 1}\widehat{\mathcal{V}}_{\mathscr R}=\infty$. In other words, by the above definitions:
\begin{quote}
``Action of $\pmb{\mathfrak F}_X$ decreases the volume $\widehat{\mathcal V}$ and increases the density $\widehat{\mathcal{V}}_{\mathscr R}$, such that the initial Riemannian volume ${\mathcal V}=\widehat{\mathcal V}|_{\kappa=0}$ remains constant.''
\end{quote}

\end{enumerate}
\noindent
We are ready to study Maxwell's equations on this GWIG.  
\section{Maxwell's equations in the presence of dark (primary) fields} \label{electromagnetism}
On ${\Bbb R}^{(1,3)}$, consider the source-free Maxwell's equations:
\begin{numcases}{}
d{\bf F}=0, \label{Maxwell_1} \\
\delta{\bf F}=0,\label{Maxwell_2}
\end{numcases}
where $\bf F$ is the Faraday's 2-form. Eq. (\ref{Maxwell_1}) implies ${\bf F}=dA$ where $A$ is a gauge field.

Now, we assume that there is a dilation distribution $\pmb{\chi}$ and a dark (primary) electromagnetic 4-potential $A^d=(v_d)^\flat$ on ${\Bbb R}^{(1,3)}$, such that $(\pmb{\mathfrak F}_X,\pmb{\mathfrak F}_V)$ acts on its tangent and cotangent bundles. Therefore, $A$ can be written as:
\begin{equation}\label{AR}
A=\pmb{f}_{\pmb{\mathsf w},\alpha^d}^{-1}(\widehat{A})= \Big({\bf 1}-{\pmb{\mathcal K}^{\alpha}}\Big)^{-1}\big(\widehat{A}-{\pmb{\mathcal K}^{\alpha}}(A^d)\big),
\end{equation}
resulting in the gauge transformed Faraday's 2-form:
\begin{equation}
 {\bf F}={d}\Big(\big({\bf 1}-{\pmb{\mathcal K}^{\alpha}}\big)^{-1}\big(\widehat{A}-{\pmb{\mathcal K}^{\alpha}}(A^d)\big)\Big)=:{\bf \widehat{F}}-{\bf F}^d, \label{ext} 
\end{equation}
where:
\begin{equation}\label{new_Fs}
{\bf \widehat{F}}:={d}\Big(\big({\bf 1}-{\pmb{\mathcal K}^{\alpha}}\big)^{-1}\big(\widehat{A}\big)\Big); \qquad {\bf F}^d:={d}\Bigg(\big({\bf 1}-{\pmb{\mathcal K}^{\alpha}}\big)^{-1}\Big({\pmb{\mathcal K}^{\alpha}}(A^d)\Big)\Bigg).
\end{equation}
\noindent
Substitution in (\ref{Maxwell_1})-(\ref{Maxwell_2}) yields Maxwell's equations on GWIG:
\begin{numcases}{}
d{\bf \widehat{F}}=d{\bf F}^d, \label{Ext_Maxwell_1} \\
\delta{\bf \widehat{F}}=\delta{\bf F}^d.\label{Ext_Maxwell_2}
\end{numcases}
Some source terms are appeared in the right hand side due to the presence of the primary field $A^d\neq 0$ and non-zero dilations $\pmb{\chi}\neq 0$. 

In the sequel, we suggest a general solution for system (\ref{Ext_Maxwell_1}--\ref{Ext_Maxwell_2}). Moreover, the conservation, hyperbolic,  and elliptic equations on GWIG are obtained from this system in Appendix \ref{append1}. Furthermore, we shall discuss the solutions of the elliptic equation in details.     
\subsection{A general solution for system (\ref{Ext_Maxwell_1}--\ref{Ext_Maxwell_2})}\label{sec_4.1}
\noindent
The internal symmetry of system (\ref{Ext_Maxwell_1}--\ref{Ext_Maxwell_2}) might be enough as a proof for the following statement:
\begin{theorem}\label{theorem1}
For the given smooth $A^d$ and ${\pmb{\mathcal K}^{\alpha}}$, if $\widehat{A}$ solve system (\ref{Ext_Maxwell_1}--\ref{Ext_Maxwell_2}), then it can be written locally as
\begin{equation}\label{theorem-eq-1}
\widehat{A}=({\bf 1}-{\pmb{\mathcal K}}^{\alpha})(A)+{\pmb{\mathcal K}^{\alpha}}A^d,
\end{equation}
where $A$ solves system (\ref{Maxwell_1}--\ref{Maxwell_2}). \QEDA
\end{theorem}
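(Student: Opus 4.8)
The plan is to read the claim as asserting that every solution $\widehat A$ of the generalized system (\ref{Ext_Maxwell_1}--\ref{Ext_Maxwell_2}) is the image, under the affine map $\pmb{f}_{\pmb{\mathsf w},\alpha^d}$ of (\ref{f*}), of an ordinary source-free potential. Since that map is invertible wherever $\kappa<1$, the proof should reduce to inverting it once and then tracking the two linear operators $d$ and $\delta$ through the construction (\ref{new_Fs}). No integration or Poincar\'e-type argument is needed in this direction, because the candidate $A$ is produced explicitly rather than reconstructed from a closed form.

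First I would \emph{define} $A$ to be the preimage, exactly as in (\ref{AR}),
$A:=\pmb{f}_{\pmb{\mathsf w},\alpha^d}^{-1}(\widehat A)=({\bf 1}-\pmb{\mathcal K}^{\alpha})^{-1}\big(\widehat A-\pmb{\mathcal K}^{\alpha}(A^d)\big)$.
This is legitimate because, by (\ref{K*}), $\pmb{\mathcal K}^{\alpha}$ has diagonal entries $1-(1-\kappa)^{{\mathsf w}^\alpha_j}$, so ${\bf 1}-\pmb{\mathcal K}^{\alpha}={\rm diag}\big((1-\kappa)^{{\mathsf w}^\alpha_j}\big)$ is invertible precisely on the region $\kappa<1$ --- which is what the word ``locally'' in the statement is tracking. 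Rearranging this definition returns the asserted identity (\ref{theorem-eq-1}), namely $\widehat A=({\bf 1}-\pmb{\mathcal K}^{\alpha})(A)+\pmb{\mathcal K}^{\alpha}A^d$, so the only substantive task left is to check that this $A$ solves (\ref{Maxwell_1}--\ref{Maxwell_2}).

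To that end I would set ${\bf F}:=dA$. Then $d{\bf F}=d^2A=0$ holds automatically, giving (\ref{Maxwell_1}) for free. For (\ref{Maxwell_2}) I would invoke the decomposition (\ref{ext}): by linearity of $d$ in its argument, ${\bf F}=dA={\bf \widehat{F}}-{\bf F}^d$ with ${\bf \widehat{F}}$ and ${\bf F}^d$ as in (\ref{new_Fs}). Applying the (linear) codifferential and using the hypothesis (\ref{Ext_Maxwell_2}) yields $\delta{\bf F}=\delta{\bf \widehat{F}}-\delta{\bf F}^d=0$, which is exactly (\ref{Maxwell_2}). Hence $A$ solves the ordinary source-free system, completing the argument.

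The computation is essentially forced by linearity, so I expect no computational obstacle; the points that genuinely need care are structural. First, $\pmb{\mathcal K}^{\alpha}$ depends on $x$ through $\kappa(x)$, so $d$ does \emph{not} commute with it --- but this is harmless here, since ${\bf \widehat{F}}$ and ${\bf F}^d$ are by definition the exterior derivatives of the already-transformed potentials, so only the linearity of $d$ in its argument is used (the extra $d\kappa$ terms are carried along automatically inside ${\bf \widehat{F}}$ and ${\bf F}^d$). Second, one must make the invertibility threshold $\kappa<1$ explicit, as it both justifies writing $\pmb{f}_{\pmb{\mathsf w},\alpha^d}^{-1}$ and explains the ``locally''. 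This is presumably the content the author compresses into ``the internal symmetry $\ldots$ might be enough as a proof'': the generalized system is nothing but the ordinary one conjugated by the invertible, $x$-dependent map $\pmb{f}_{\pmb{\mathsf w},\alpha^d}$, so their solutions correspond bijectively.
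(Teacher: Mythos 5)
Your proposal is correct and is essentially the paper's own argument written out in full: the paper offers no explicit proof beyond the remark that the ``internal symmetry'' of system (\ref{Ext_Maxwell_1}--\ref{Ext_Maxwell_2}) suffices, and what that remark compresses is exactly your construction --- define $A$ as the preimage under the pointwise-invertible affine map (valid where $\kappa<1$, which is the content of ``locally''), note ${\bf F}=dA={\bf \widehat{F}}-{\bf F}^d$ by linearity, and read off (\ref{Maxwell_1})--(\ref{Maxwell_2}) from $d^2=0$ and the hypothesis (\ref{Ext_Maxwell_2}). Your observation that $d$ need not commute with the $x$-dependent $\pmb{\mathcal K}^{\alpha}$, but that this is immaterial because ${\bf \widehat{F}}$ and ${\bf F}^d$ are defined as derivatives of the already-transformed potentials, is a point of care the paper leaves implicit.
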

\noindent
Now, note that $\lim_{\kappa \to 1}\widehat{A}=A^d$. This is the way that we impose boundary conditions on system (\ref{Maxwell_1}--\ref{Maxwell_2}). Depending on the distribution of $A^d$, it can be Dirichlet or Neumann boundary condition, but here we merely study the Dirichlet one. Moreover, we do not study the dynamics of $\kappa$, that is, we assume that there is a given stationary $\kappa\neq \kappa(x^0)$, and we assume that ${\bf F}^d$ is smooth enough. It is comparable with Penrose conformal infinity, as we let $\lambda\to \infty$ (equivalently $\kappa\to 1$). However, here this is the dual frame that goes to zero, and the initial Riemannian volume remains unchanged (see item \ref{item4-3} of $\S$ \ref{R-W}).\\ 
Consider homogeneous Maxwell's equations as an initial/boundary value problem:
\begin{numcases}{}
d{\bf F}=0, \qquad x^\mu\neq 0, \label{IBV_1} \\
\delta{\bf F}=0, \qquad x^\mu\neq 0, \label{IBV_2}\\
{\bf F}={\bf F}^d,  \qquad x^\mu={0}, \label{IBV_3}\\
{\bf F}={\bf F}^d, \qquad x^0>0,x^j={0}, \label{IBV_4}
\end{numcases}
where $j=1,...,3$ are the spatial coordinate indices. We assume that the initial data are smooth enough. The equations are defined on $x={\Bbb R}^4\setminus 0$, and the system has Dirichlet initial/boundary conditions at $x^\mu=0$. 

\noindent
Now, instead of solution of the above system, we suggest solution of system: 
\begin{numcases}{}
d{\bf \widehat{F}}_a=d{\bf F}^d, \label{Ext_IBV_1} \\
\delta{\bf \widehat{F}}_a=\delta{\bf F}^d,\label{Ext_IBV_2}
\end{numcases}
on ${\Bbb R}^4$, in which $\pmb{\mathcal{K}}^\alpha_a={\rm diag}(1-e^{-{\mathsf w}_j\delta_a})$ is a regularized $\pmb{\mathcal K}^\alpha$ that:
\begin{numcases}{}
\pmb{\mathcal K}^{\alpha}({\Bbb R}^4\setminus 0)=\bf 0, \label{non_reg_K_1} \\
\pmb{\mathcal K}^{\alpha}(0)=\bf 1,\label{non_reg_K_2}
\end{numcases}
and $\lambda(x)=:\lim_{a\to 0}\lambda_a(x)=:\lim_{a\to 0}\delta_a(x)$ is substituted in Eq. (\ref{kap_lamb_rel}), in which $\delta_a(x)$ is the four dimensional regularized Dirac delta function, regularized by the radius $a$ of a four dimensional ball placed on $x=0$.\\ 
Now, if $\widehat{A}_a$ solves system (\ref{Ext_IBV_1})--(\ref{Ext_IBV_2}), then theorem \ref{theorem1} guarantees that ${\bf F}=\lim_{a\to 0}\widehat{\bf F}_a$, where: 
\begin{equation}
{\bf \widehat{F}}_a:={d}\Big(\big({\bf 1}-{\pmb{\mathcal K}^{\alpha}_a}\big)^{-1}\big(\widehat{A}_a\big)\Big).
\end{equation}
\noindent
As an example of the above procedure, we shall use it in finding the solution of the (singularity-free) Laplace equation (\ref{mod_Laplace_1}) on GWIG. 
\section{A singularity-free potential theory}\label{gen_potential_theory}
\noindent
In Eq. (\ref{mod_Laplace_1}), we assume ${\mathsf z}_j=1$, ${\mathsf w}_j={\mathsf w}$, and we define:
\begin{equation}\label{kappa_2}
\tilde{\kappa}(x):=1-(1-\kappa)^{\mathsf w}=1-e^{-{\mathsf w}\lambda(x)}=\frac{e^{\mathsf{w}\lambda(x)}-1}{e^{\mathsf{w}\lambda(x)}},
\end{equation}  
for the convinience. Then, recalling $\nabla(e^{\mathsf{w}\lambda})/e^{\mathsf{w}\lambda}=\mathsf{w}\nabla \lambda$, one can write Eq. (\ref{mod_Laplace_1}) as:
\begin{equation}
\widehat{\Delta}\widehat{\phi}=\widehat{\Delta}(\tilde{\kappa}\phi^d), \label{Mod_Lap_1}
\end{equation}
where 
\begin{equation}\label{Mod_Lap_2}
\widehat{\Delta}:=\Delta+2\mathsf{w}\nabla\lambda\cdot \nabla+\mathsf{w}\left(\Delta\lambda+\mathsf{w}(\nabla\lambda)^2\right).
\end{equation}
Now, Theorem \ref{theorem1} is applicable directly:
\begin{corol}\label{corol_1}
For the given smooth $\phi^d$ and $\tilde{\kappa}$, if $\widehat{\phi}$ solves equation (\ref{Mod_Lap_1}), then it can be written locally as
\begin{equation}\label{corr-eq-1}
\widehat{\phi}=(1-\tilde{\kappa})\phi^h+\tilde{\kappa}\phi^d,
\end{equation}
where $\phi^h=\phi_{\rm Riemannian}$ is harmonic, that is, $\Delta\phi^h=0$.
\end{corol}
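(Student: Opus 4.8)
The plan is to recognize that the modified operator $\widehat{\Delta}$ defined in (\ref{Mod_Lap_2}) is just the ordinary Laplacian conjugated by multiplication with $e^{-\mathsf{w}\lambda}=1-\tilde{\kappa}$; granting this, the corollary reduces to the assertion that any function annihilated by $\widehat{\Delta}$ is $e^{-\mathsf{w}\lambda}$ times a harmonic function. This is exactly the scalar incarnation of Theorem \ref{theorem1} under the dictionary $({\bf 1}-\pmb{\mathcal K}^\alpha)\mapsto 1-\tilde{\kappa}$, $A\mapsto\phi^h$, $A^d\mapsto\phi^d$, so in principle one simply invokes that theorem. I would instead give the short self-contained argument, since the only justification offered for Theorem \ref{theorem1} is the heuristic ``internal symmetry,'' and here a one-line conjugation identity makes everything rigorous.

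First I would peel off the particular part. Setting $u:=\widehat{\phi}-\tilde{\kappa}\phi^d$, linearity of $\widehat{\Delta}$ converts equation (\ref{Mod_Lap_1}) into the homogeneous equation $\widehat{\Delta}u=0$. It therefore suffices to characterise the kernel of $\widehat{\Delta}$.

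The key step is the intertwining identity
\begin{equation}
\widehat{\Delta}\big(e^{-\mathsf{w}\lambda}\psi\big)=e^{-\mathsf{w}\lambda}\,\Delta\psi,
\qquad\text{equivalently}\qquad
\widehat{\Delta}=e^{-\mathsf{w}\lambda}\,\Delta\,e^{\mathsf{w}\lambda},
\end{equation}
valid for every smooth $\psi$. I would verify it by expanding $\Delta(e^{-\mathsf{w}\lambda}\psi)$ with the product rule, using $\nabla e^{-\mathsf{w}\lambda}=-\mathsf{w}(\nabla\lambda)e^{-\mathsf{w}\lambda}$ together with the relation $\nabla(e^{\mathsf{w}\lambda})/e^{\mathsf{w}\lambda}=\mathsf{w}\nabla\lambda$ recalled just before (\ref{Mod_Lap_1}). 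Grouping by the order of derivative falling on $\psi$: the gradient cross-terms cancel against the first-order piece $2\mathsf{w}\nabla\lambda\cdot\nabla$ of $\widehat{\Delta}$, while the three zeroth-order contributions — the term $(\Delta e^{-\mathsf{w}\lambda})/e^{-\mathsf{w}\lambda}=\mathsf{w}\big(\mathsf{w}(\nabla\lambda)^2-\Delta\lambda\big)$, the cross contribution $-2\mathsf{w}^2(\nabla\lambda)^2$ coming from the first-order piece acting on the exponential, and the explicit potential $\mathsf{w}\big(\Delta\lambda+\mathsf{w}(\nabla\lambda)^2\big)$ — sum identically to zero. What survives is precisely $e^{-\mathsf{w}\lambda}\Delta\psi$.

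With the identity in hand the conclusion is immediate. Because $1-\tilde{\kappa}=e^{-\mathsf{w}\lambda}$ is smooth and nowhere vanishing (for $\kappa<1$ one has $\lambda<\infty$), I may define $\phi^h:=e^{\mathsf{w}\lambda}u=u/(1-\tilde{\kappa})$, so that $u=(1-\tilde{\kappa})\phi^h$. The intertwining identity then gives $0=\widehat{\Delta}u=e^{-\mathsf{w}\lambda}\Delta\phi^h$, whence $\Delta\phi^h=0$, i.e. $\phi^h$ is harmonic. Restoring the particular part yields $\widehat{\phi}=u+\tilde{\kappa}\phi^d=(1-\tilde{\kappa})\phi^h+\tilde{\kappa}\phi^d$, which is (\ref{corr-eq-1}). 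The only genuine work is the bookkeeping in the zeroth-order cancellation of the intertwining computation; the nonvanishing of $1-\tilde{\kappa}$ is what guarantees $\phi^h$ is well defined, and the ``local'' qualifier is needed merely because $\phi^h$ is recovered pointwise from $u$, with no global hypothesis on the domain entering the argument.
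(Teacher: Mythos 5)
Your proof is correct, and it takes a genuinely different --- and more substantive --- route than the paper. The paper's entire proof of this corollary is the sentence preceding it (``Theorem \ref{theorem1} is applicable directly''), and Theorem \ref{theorem1} itself carries no proof beyond the remark that the internal symmetry of system (\ref{Ext_Maxwell_1})--(\ref{Ext_Maxwell_2}) ``might be enough as a proof.'' You replace this appeal by a self-contained argument built on the conjugation identity
\begin{equation*}
\widehat{\Delta}\big(e^{-\mathsf{w}\lambda}\psi\big)=e^{-\mathsf{w}\lambda}\,\Delta\psi ,
\end{equation*}
which I have checked and which is exactly right: the first-order cross terms cancel, and the three zeroth-order contributions $\mathsf{w}\big(\mathsf{w}(\nabla\lambda)^2-\Delta\lambda\big)$, $-2\mathsf{w}^2(\nabla\lambda)^2$, and $\mathsf{w}\big(\Delta\lambda+\mathsf{w}(\nabla\lambda)^2\big)$ sum to zero. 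Since $1-\tilde{\kappa}=e^{-\mathsf{w}\lambda}$ is nowhere zero for finite $\lambda$ (i.e.\ for $\kappa<1$), the definition $\phi^h:=e^{\mathsf{w}\lambda}\big(\widehat{\phi}-\tilde{\kappa}\phi^d\big)$ is legitimate, and harmonicity of $\phi^h$ together with the decomposition (\ref{corr-eq-1}) follow at once. Note also that your identity is the same one the paper uses implicitly, in the opposite direction, to pass from (\ref{mod_Laplace_1}) to (\ref{Mod_Lap_1}) via $\nabla(e^{\mathsf{w}\lambda})/e^{\mathsf{w}\lambda}=\mathsf{w}\nabla\lambda$, so your computation simultaneously certifies that step. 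As for what each approach buys: the paper's invocation of Theorem \ref{theorem1} is uniform --- one statement covering the conservation, hyperbolic, and elliptic reductions of Maxwell's equations alike --- but it rests on an unproved assertion; your conjugation argument is rigorous and elementary, and in fact yields slightly more than claimed, since the decomposition is global on any region where $\kappa<1$ (the ``locally'' qualifier is not needed), and the converse implication (any $\widehat{\phi}$ of the form (\ref{corr-eq-1}) with $\phi^h$ harmonic solves (\ref{Mod_Lap_1})) is immediate from the same identity.
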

As an example, we use Eq. (\ref{Mod_Lap_1}) and corollary \ref{corol_1} to address a classical physics problem, that is, obtaining a non-singular and stable model for a finite size (non-point) charged particle.
\section{A non-singular model for a finite-size fixed charged particle}\label{non-sing}
The longstanding difficulty of defining a non-singular point source in the classical potential theory can be solved in the above generalized Weyl potential theory (introduced by Eq. (\ref{Mod_Lap_1})). This is mainly because infinite dilation removes zero volumes (as item \ref{item4-3} in $\S$ \ref{R-W} emphasizes). 
 
It is aimed to find a model for a fixed (non-moving) charged particle, with an electric charge $Q$ and a radius $a\to 0$, placed at the origin of a spherical coordinate. By definition of a regularized three dimensional Dirac delta function $\delta({\bf r})=:\lim_{a\to 0}\delta_a({\bf r})$, we denote  the regularized $\tilde{\kappa}$ as $\kappa_a(\bf{r})$ (see Eq. (\ref{kappa_2})). Then, by assuming $\mathsf{w}=2$, and substitution of  $\lambda({\bf r}):=\delta_{a}({\bf r})$ in Eq. (\ref{Mod_Lap_1}), one obtains:
\begin{equation}\label{Eq18}
\widehat{\Delta}\widehat{\phi}=\widehat{\Delta}(\kappa_{a} \phi^d),
\end{equation}
where
\begin{equation}\label{Eq19}
\widehat{\Delta}:=\Delta +4\nabla\delta_{a}\cdot\nabla+2\left(\Delta\delta_{a}+2(\nabla\delta_{a})^2\right).
\end{equation}
\noindent
The solution of Eq. (\ref{Eq18}) is our model for the charged particle.
\begin{Note}
From Eq. (\ref{kappa_2}), one can see that $\kappa_a(r)$ is a regularized indicator function:
\begin{numcases}
{\lim_{a\to 0}\kappa_a({\bf r})=\lim_{a\to 0}\Big(\frac{e^{\mathsf{w}\delta_a({\bf r})}-1}{e^{\mathsf{w}\delta_a({\bf r})}}\Big)={\bf 1}(\bf{r})=}\label{indicator_func}
1,  ~~for~ {\bf r}=0 , \\
0,   ~~for~ {\bf r}\neq 0, 
\end{numcases} 
It means that ${\rm supp}\left( \widehat{\Delta}(\kappa_{a} \phi^d) \right) $ (the R.H.S of Eq. (\ref{Eq18})) is the neighborhood of ${\bf r}=0$. 
\end{Note}
\begin{rem}
In the classical potential theory, by $a\to 0$ the Riemannian volume of the particle goes to zero, in contrast to here that it remains unchanged (see item \ref{item4-3} in $\S$ \ref{R-W}). Therefore, the singularity is removed. It also removes the classical difficulty of infinite energy of the vacuum, if we add the condition $\lambda\neq\infty$ to the definition of the (classical) vacuum state. 
\end{rem}
The solution can be obtained from corollary \ref{corol_1}. It is just needed to find $\phi^h$ and $\phi^d$. We shall find them in non-dimensionalized form, denoting by $(\breve{\cdot})$ symbol. 
\subsection{The harmonic solution $\phi^h$}
This is the solution outside the particle, that is, on Riemannian geometry. From the classical potential theory:
\begin{equation}\label{harmonic_1}
\breve{\phi}^h(\breve{r})=\frac{1}{\breve{r}},
\end{equation}
where $\breve{r}:=r/a$, $\breve{\phi}^h:=\phi_h/\phi_a$, and $\phi_a:=\phi(a)=Q/(4\pi\epsilon_0 a)$, and $\epsilon_0$ is the vacuum permittivity. 
\subsection{The dark field solution $\phi^d$}
Inside the particle, where $\kappa_a\to 1$, we have $\phi\to\phi^d$, regardless of the particular distribution of $\phi^d$. Therefore, one can construct locally a suitable non-singular $\phi^d$. Here we consider merely the case of a constant $\phi^d$, resulting in imposition of Dirichlet boundary condition on the classical Laplace equation \footnote{The type of boundary condition being imposed is dependent on the particular distribution of $\phi^d$. With this regards, the case of a constant right hand side of Eq. (\ref{Eq18}), that is, $\widehat{\Delta}(\kappa_{a} \phi^d)=q_d\neq q^d(x)$, might impose the Neumann boundary condition on the classical Laplace equation. We do not treat it in the present article.}. 

Recalling the gauge freedom of the potential theory, if $\phi^h$ is a solution, then $\phi^h+\phi^\prime$ is also a solution for any arbitrary $\phi^\prime$. Then, by considering Eq. (\ref{corr-eq-1}), one can conclude that, $\phi^d$ is arbitrary. Here, without loss of generality, we choose:
\begin{equation}\label{dark_1}
\phi^d=\phi_a.
\end{equation}
\begin{figure}
	\centering
		\includegraphics[scale=.05]{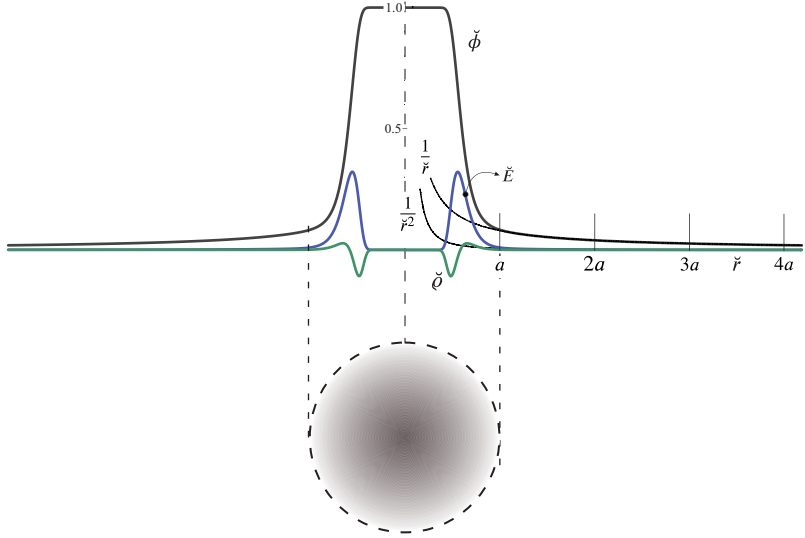}
	\caption{The values of $\breve{\phi}$, $\breve{E}$, and $\breve{\varrho}$ are shown versus $\breve{r}$ inside and outside of the particle. Moreover, $\breve{\phi}$ and $\breve{E}$ are compared with the classical $\breve{r}^{-1}$ and $\breve{r}^{-2}$.}
	\label{FIG_1}
\end{figure}
\subsection{The model}
Suitable $\phi^h$ and $\phi^d$ are found. Therefore, by substitution of (\ref{harmonic_1}) and (\ref{dark_1}) in (\ref{corr-eq-1}), and definition of $\breve{\phi}:=\widehat{\phi}/\phi_a$, one obtains:
\begin{equation}\label{particle_model_1}
\breve{\phi}(\breve{r}):=\frac{\widehat{\phi}(r)}{\phi_a}=\frac{1}{\breve{r}}+\big(1-\frac{1}{\breve{r}}\big)\kappa_{a}(\breve{r}). 
\end{equation}
This is the solution of Eq. (\ref{Eq18}), which is our non-singular model for a finite size charged particle with a constant $\phi^d$. It can be checked easily that it is smooth on ${\Bbb R}^3$ for any $a\to 0$.\\
\noindent
Before investigation of the model, we obtain the following quantities:
\begin{enumerate}
\item Eqs. (\ref{mod_Laplace_1}) and (\ref{Eq18}), are obtained for the components of $\eta^{-1}(\widehat{A})$ on $TM$. As a result, $\widehat{\phi}$ is defined on $(M,g)$. Therefore, we define the electric field $\widehat{\bf E}\in\Gamma(TM)$ as $\widehat{\bf E}:=-(d\widehat{\phi})^\sharp$, and its non-dimensional form: 
\begin{equation}\label{E_eq}
{\bf{\breve{E}}}(\breve{r}):=\frac{{\bf E}(r)}{{E}(a)}=-\frac{\partial\breve{\phi}}{\partial{\breve{r}}}=\Big(\frac{1-\kappa_a}{\breve{r}^2}-\frac{\breve{r}-1}{\breve{r}}\kappa_a^\prime\Big)\partial_{\breve{r}},
\end{equation}
where ${E}(a):=Q/(4\pi\epsilon_0 a^2)$ and $\kappa_a^\prime:=\partial_{\breve{r}}\kappa_a$ is the derivative of $\kappa_a$ with respect to $\breve{r}$. It should be noted that for $\kappa_a=\kappa_a^\prime=0$, we have $\widehat{\bf E}={\bf E}=(d\phi)^\sharp$.  
\item By definition of the electric charge density $q:=\widehat{\varrho} {d{\mathcal V}}$, where $\epsilon_0 ^{-1}\widehat{\varrho}=\delta d (\widehat{\phi})=div(\widehat{\bf E})$, one can write:
\begin{equation}\label{chrge_dens}
\breve{\varrho}(\breve{r}):=\frac{\widehat{\varrho}(r)}{\varrho_0}=\left(\frac{2}{3\breve{r}}\right)\kappa_a^\prime+\frac{1}{3}\left(1-\frac{1}{\breve{r}} \right)\kappa_a^{\prime\prime},
\end{equation} 
where $\varrho_0:=Q/(4/3\pi a^3)$ is the uniform electric charge density, and $\kappa_a^{\prime\prime}:=\partial^2_{\breve{r}}\kappa_a$.  
\end{enumerate}
In Fig. \ref{FIG_1} the quantities ${\phi}$ and $\tilde{E}:={\bf{\tilde{E}}}\cdot{\bf{\hat{r}}}$ and $\tilde{q}$ are shown  versus $\tilde{r}$. The figure is obtained for the particular regularized Dirac delta function:
\begin{equation}\label{delt_p}
\delta_a(\tilde{r}):=\frac{\beta}{a}\frac{e^{-\tilde{r}}}{(1+e^{-\tilde{r}})^2},
\end{equation}
where the adjustment constant $\beta\approx 10$ is chosen such that ${\rm supp}(\kappa_a(\tilde{r}))\sim a$.\\

\noindent
The following points are noticeable about the model:
\begin{enumerate}
\item The model is not completely scale-free.\\ 
For a fixed Riemannian radius $a$, the particle and its surroundings (Riemannian and Weylian as a whole), may be expanded or be contracted freely; because $\kappa>0$ is defined relative to the Riemannian geometry $\kappa=0$. But, for a Riemannian observer, the model depends on $a$.
\item It is easy to show that $\phi^d$ is arbitrary. In fact, for a null dark (primary) field $\phi^d=0$, one would get
\begin{equation}\label{null_1}
\breve{\phi}(\breve{r})=\frac{1-\kappa_{a}}{\breve{r}},
\end{equation}
instead of Eq. (\ref{particle_model_1}), sharing main properties with it.
\item Eq. (\ref{chrge_dens}) may be seen as a geometric interpretation of the electric charges. As it shows: 
\begin{quote}
``{\it The electric charge is a result of change of dilation of the electric field.}''
\end{quote}
\end{enumerate}
In general, consider the Laplace equation with the Dirichlet boundary condition:
\begin{numcases}{}\label{Laplace_3}
\Delta \varphi=0, \qquad x\in{\Bbb R}^3\setminus 0; \label{Laplace_3_1} \\
\varphi(0)=\varphi^d. \label{Laplace_3_2}
\end{numcases}
As a special case of the solution provided in $\S$ \ref{sec_4.1}, we suggest solution of Eq. (\ref{Eq18}) instead of system (\ref{Laplace_3_1})-(\ref{Laplace_3_2}), which according to corollary \ref{corol_1} has the solution $\varphi=\lim_{a\to 0} \varphi_{\delta_a}$, where: 
\begin{equation}\label{lap_sol}
\varphi_{\delta_a}=(1-\kappa_a)\frac{\varphi^d}{r^2}+\kappa_a\varphi^d.
\end{equation}
The superiority of the above solution, in comparison to the classical fundamental solutions, is obvious; it remains smooth by $a\to 0$.
\section{Conclusions}\label{conclusion}
By affine transformations of the pair ``vector space--dual space'', both dilation and arbitrariness of the physical gauges, are included in the geometric vector spaces. Application of these transformations on the pair tangent--cotangent bundles of a (pseudo)-Riemannian manifold results in a generalized Weyl integrable geometry (GWIG), containing some primary fields which can be interpreted as the dark fields. In this framework, the classical Weyl integrable geometry (CWIG) has already included interactions with null dark fields.

The GWIG introduces a new internal symmetry in Maxwell's equations by which the interactions with the primary (dark) fields are explainable. The gauge-free conservation, hyperbolic and elliptic equations are derived on GWIG from Maxwell's equations.

In GWIG, an infinitely dilated manifold consists of isolated points,  each one has a pair tangent ``vector--covector'' of the primary fields, instead of the tangent ``vector space--dual space''. Based on this property, a method, comparable with Penrose conformal infinity, is suggested that imposes canonically the boundary conditions on Maxwell's equations and its sub-sets. A singularity-free potential theory is constructed from the gauge-free elliptic equation on GWIG. The theory is singularity--free because by approaching dilation to infinity, the values remain defined. The potential theory is used in the construction of a non-singular model for a classical (non-quantum) fixed (non-moving) finite-size charged particle. It solves the old difficulty of infinite energy of the classical vacuum.    
\appendix
\section{Conservation, parabolic and elliptic equations on the GWIG}
\label{append1}
To be comparable with their classical counterparts, we obtain these equations with respect to a Riemannian observer $\mathscr R$ (see $\S$ \ref{R-W}). We will use both $\kappa$ and $\lambda$ for the convenience, but they are related via Eq. (\ref{kap_lamb_rel}).

On the spacetime manifold $({\Bbb R}^4,\eta)$, where $\eta:={\rm diag}(1,-1,-1,-1)$, we consider a dilation tensor $\pmb{\chi}:={\rm diag}(e^{-{\mathsf z}_j\lambda})$ inducing Weyl weights $\pmb{\mathsf w}=\pmb{\mathsf w}(\pmb{\mathsf z})$. The morphisms $(\pmb{\mathfrak F}_X,\pmb{\mathfrak F}_V)$ are then defined on the manifold, resulting in the metric $\widehat{\eta}_{\mathscr R}=\chi^{-1}(\eta)={\rm diag}(e^{{\mathsf z}_0\lambda},-e^{{\mathsf z}_1\lambda},-e^{{\mathsf z}_2\lambda},-e^{{\mathsf z}_3\lambda})$. Moreover, assume that there is a dark (primary) 1-form $\alpha^d=(v_d)^{\flat}$ on the manifold. 
\begin{rem}\label{Rem5}
In what follows, we need the four-gradient with respect to $\mathscr R$. According to corollary \ref{coroll_1}, it reads: $\partial^\mu=\widehat{\eta}^{\mu\nu}\partial_\nu$. Moreover, note that $\eta$ is diagonal. Therefore, according to corollary \ref{corol1}, $\widehat{A}^\mu=\eta^{\mu\nu}\widehat{A}_\nu$.
\end{rem}
The procedure is exactly the same as the procedure of obtaining the classical conservation, hyperbolic and elliptic equations from the classical Maxwell's equations, that is: the gauge fixing $\delta A\equiv 0$ sets $d\delta A=0$. Then by satisfying $\delta {\bf F}=0$, we set $\delta d A=0$ indeed. Together, they result in: 
\begin{equation}\label{derham1}
{\Delta}_H A:=(d\delta+\delta d)A={\Delta}_H\Big(\big({\bf 1}-{\pmb{\mathcal K}}\big)^{-1}\big(\widehat{A}-{\pmb{\mathcal K}}(A^d)\big)\Big)=0, 
\end{equation}
where $\Delta_H$ is the Hodge-de Rham Laplacian.

However, depending on whether the fields are non-stationary $\partial_t:=\partial_0\neq 0$ or stationary $\partial_t=0$, we obtain different types of equations.
\subsection{The non-stationary fields}
For the non-stationary fields $\partial_t\neq 0$, the gauge fixing ({\it i.e.}, Lorenz condition) $\delta A=0$ reads:
\begin{equation}\label{lor_1}
-\frac{\partial}{\partial t}\left(e^{{\mathsf w}_0\lambda}\big(\widehat{A}_0-{\mathcal K}^\alpha_0A^d_0\big)\right)+\frac{\partial}{\partial x^j}\left(e^{{\mathsf w}_j\lambda}\big(\widehat{A}_j-{\mathcal K}^\alpha_jA^d_j\big) \right)=0.
\end{equation}
Then, by writing $\partial^\mu=\widehat{\eta}^{\mu\nu}{\partial}_\nu$, $\widehat{A}^\mu=\eta^{\mu\nu}\widehat{A}_\nu$ (see remark \ref{Rem5}), and $({A}^d)^\mu=\eta^{\mu\nu}(A^d)_\nu$, the divergence of $\widehat{A}^\mu$  may be written as: 
\begin{equation}\label{Ext_Lor}
e^{-{\mathsf z}_0\lambda}\partial_0\left(e^{{\mathsf w}_0\lambda}\widehat{A}^0\right)+e^{-{\mathsf z}_j\lambda}\partial_j\left(e^{{\mathsf w}_j\lambda}\widehat{A}^j \right)={\mathcal R}_1,
\end{equation}
where: 
\begin{equation}
{\mathcal R}_1:=e^{-{\mathsf z}_\mu\lambda}\partial_\mu\Big(e^{\mathsf w_\mu\lambda}(1-e^{-\mathsf w_\mu\lambda})(A^d)^\mu \Big).
\end{equation}
Equation (\ref{Ext_Lor}) is the conservation equation on GWIG, and its solution yields Lorenz gauge $\widehat{A}_\mu$ on GWIG. Substitution of $\widehat{A}$ in Maxwell's equations results in four decoupled equations with a generic form:
\begin{equation}\label{mod_wave_1}
-\partial^2_{0} \Big(e^{\mathsf w_\mu\lambda}\big(\widehat{\phi}-{\mathcal K}_\mu\phi^d\big)\Big)+{\partial^2_{j}}\Big(e^{{\mathsf w}_\mu\lambda}\big(\widehat{\phi}-{\mathcal K}_\mu\phi^d\big)\Big)=0,    
\end{equation}
where $\widehat{\phi}$ and ${\phi}^d$ stand for any $\mu^{\rm th}$ component of $\widehat{A}$, and $A^d$. This equation is equivalent with the equation $\partial^\nu\partial_\nu(\eta^{\mu\sigma}{A}_\sigma)=0$ on $\widehat{T}M$, which is the wave equation on GWIG:
\begin{equation}\label{mod_wave_2}
\Big(e^{-{\mathsf z}_0\lambda}\partial^2_{0}-e^{-{\mathsf z}_j\lambda}\partial^2_{j}\Big)(e^{{\mathsf w}_\mu\lambda}\widehat{\phi})={\mathcal R}_2, 
\end{equation}
where
\begin{equation}\label{mod_wave_3}
{\mathcal R}_2:=\Big(e^{-{\mathsf z}_0\lambda}\partial^2_{0}-e^{-{\mathsf z}_j\lambda}\partial^2_{j}\Big)\big(e^{\mathsf w_\mu\lambda}(1-e^{{-\mathsf w}_\mu\lambda})\widehat{\phi}^d\big).
\end{equation}
We emphasis that the behavior of equation may be different because ${\mathsf w}_\mu$ may be different from a component to the other.
\subsection{The stationary fields}
For $\partial_t=0$ where the the geometry is elliptic, the gauge fixing ${\delta} A=0$ results in Coulomb gauge $\widehat{A}$ on GWIG  satisfying:
\begin{equation}\label{Columb_1}
e^{-{\mathsf z}_j\lambda}\partial_j\Big(e^{{\mathsf w}_j\lambda}\widehat{A}^j\Big)=e^{-{\mathsf z}_j\lambda}\partial_j\Big(e^{\mathsf w_j\lambda}(1-e^{-{\mathsf w}_j\lambda})({A}^d)^j\Big), 
\end{equation}      
on $\widehat{T}M$. Substitution in Maxwell's equations results in four decoupled Laplace equations on GWIG with a generic form:
\begin{equation}\label{mod_Laplace_1}
e^{-{\mathsf z}_j\lambda}\partial^2_{j}\big(e^{{\mathsf w}_\mu\lambda}\widehat{\phi}\big)=e^{-{\mathsf z}_j\lambda}\partial^2_{j}\big(e^{\mathsf w_\mu\lambda}(1-e^{-{\mathsf w}_\mu\lambda}){\phi}^d\big).
\end{equation}
where $\widehat{\phi}$ and $\phi^d$ stand for any $\mu^{\rm th}$ component of $\eta^{-1}(\widehat{A})$ and $\eta^{-1}({A}^d)$. This equation results in a potential theory on GWIG, as it is discussed in $\S$ \ref{gen_potential_theory}.

\end{document}